\DeclareMathOperator*{\argmax}{arg\,max}
\DeclareMathOperator*{\argmin}{arg\,min}
\newcommand{\pr}{\mathbb P}
\newcommand{\ex}{\mathbb E}
\newcommand{\ind}{\operatorname{\mathds{1}}}
\newcommand{\mc}{\mathcal}
\theoremstyle{plain}
\newtheorem{theorem}{Theorem}
\newtheorem{corollary}{Corollary}[theorem]
\newtheorem{prop}{Proposition}
\theoremstyle{definition}
\newtheorem{defn}{Definition}
\theoremstyle{remark}	
\newtheorem{remark}{Remark}
\begin{document}

	\title{Optimal De-Anonymization in Random Graphs with Community Structure}

	\author{\IEEEauthorblockN{Efe Onaran, Siddharth Garg, Elza Erkip}		
		\IEEEauthorblockA{NYU Tandon School of Engineering}
		\IEEEauthorblockA{eonaran@nyu.edu, siddharth.garg@nyu.edu, elza@nyu.edu}

	}

	\maketitle
	\begin{abstract}
	Anonymized social network graphs published for academic or 
	advertisement purposes are subject to de-anonymization attacks by leveraging 
	side information in the form of a second, public social network graph correlated 
	with the anonymized graph. 
	This is because the two are	
	from the same underlying graph of true social relationships. In this paper, 
	the maximum a posteriori (MAP) estimates of user identities for 
	the anonymized graph are characterized and sufficient conditions for successful de-anonymization 
	for underlying graphs with community structure are provided. The results generalize 
	prior work that assumed underlying graphs of Erd\H{o}s-R\'enyi type, and prove the optimality of the attack strategy adopted in the literature.  
	\end{abstract}
	
	\IEEEpeerreviewmaketitle

	\section{Introduction}
	\label{intro}	
Privacy of users in social networking sites is an important
concern, especially with the rising popularity of social
networks. Social network data is often intentionally (and
sometimes accidentally) revealed for academic and advertisement
purposes. Before such revelations, names and other
identifying information about users are usually omitted (or
randomized) in an effort to preserve the anonymity of the
users. However, removing user names  from published
data may not in itself be sufficient to preserve privacy. This is
because an attacker might be able to use publicly available side
information correlated with the anonymized data to recover
user identities. 
For instance, anonymized
Netflix data of user viewing preferences was de-anonymized
using a publicly available IMDB database~\cite{narayanan2008robust}.

In this work, we focus on the problem of privacy in social
networks when the data published by the networking site is an
anonymized graph of user connectivity. That is, each vertex in
the graph is a user, and users that are connected (for instance,
friends) share an undirected edge. The graph is anonymized
by removing or randomizing the labels corresponding to each
vertex. However, as in the Netflix/IMDB example above, an
attacker might be able to recover some or all of the user
identities using data from a second social network whose
graph of user connectivity is publicly available. For example, Narayanan and Shmatikov 
demonstrated an attack in which 
anonymized Twitter users were
identified using Flickr data~\cite{narayanan2009anonymizing}. 

The success of social network de-anonymization is premised on the 
observation that data
from two social networks can be reasonably expected to
be correlated assuming that the two graphs, $g_1$ and $g_2$, are
independently sampled (with probability $s$) from the same
underlying graph, $g$, which represents the \emph{true} relationships
between all users. 
In recent work~\cite{grossglauser11}, Pedarsani and Grossglauser
have theoretically analyzed the social network de-anonymization 
by assuming that the underlying
graph, $g$, is an Erd\H{o}s-R\'enyi graph with edge probability $p$.
In particular, \cite{grossglauser11} establishes \emph{sufficient} conditions
on the edge probability, $p$, and sampling probability, $s$, that
guarantee perfect de-anonymization, that is, each user in the
anonymized network is correctly identified.

There are several important directions in which the prior work can be extended. First, in obtaining their
results, Pedarsani and Grossglauser~\cite{grossglauser11} and succeeding work~\cite{kazemi2015can,lyzinski2014seeded} assume that the attacker
matches nodes in the anonymized graph to those in the public
graph so as to minimize a certain cost function, which is the number of mismatched edges. However, the
authors do not provide a rationale for this choice of cost
function, or prove the optimality of the attacker's strategy. Second, it has been empirically shown that graphs of social interaction are
not, in fact, Erd\H{o}s-R\'enyi 
and instead have community structure~\cite{girvan2002community}.
A natural question, therefore, is what are the attacker's
capabilities when the statistics of the underlying graph reflect
community structure? The goal of our work is to address these
questions.

Specifically, in this paper, we make the following contributions.
First, we derive the maximum a posteriori probability
(MAP) matching of nodes in the anonymized graph
to the public graph assuming an underlying graph $g$ with
community structure. For the special case of a graph with
a single community, which is simply an Erd\H{o}s-R\'enyi graph,
we establish the equivalence between the MAP matching and
the matching that minimizes Pedarsani and Grossglauser's cost
function, establishing the optimality of the
attacker's strategy in that and several succeeding works.
Second, we determine conditions on the parameters of the
underlying graph and the sampling probability that guarantee
perfect de-anonymization for the general case of graphs with
community structure.

We note that when $g$ has community structure, 
the de-anonymization problem has some relation 
to the well known community detection problem~\cite{commdetect}. 
Hence the problem can be 
viewed as community detection with side information, although we 
note that recovering the community labels 
in $g_2$ is only a necessary, but not 
sufficient, condition for 
de-anonymization. In the rest of this paper, we 
will focus on an attack setting in which the attacker's goal is to determine user identities only, assuming that the community labels in $g_1$ and $g_2$ are known.

	\section{System Model}		
	Let $g = (V,E_g)$ 
	be an undirected graph of true relationships 
	between $n$ users.  The number of users is
	$|V|=n$, and the edge set is
	$E_g \subseteq V \times V$.    
	 We assume nodes of $ g $ are partitioned into disjoint subsets of $ k \geq 1 $ communities, $C_1, C_2, \ldots, C_k$. The number of nodes in community 
	 $C_i$ is $|C_i| = n_i$. An example of $g$ for the 
	 two community case is shown in Figure~\ref{fig:model}.
	 
	 \begin{figure}[!ht]
  \centering
    \includegraphics[width=0.45\textwidth]{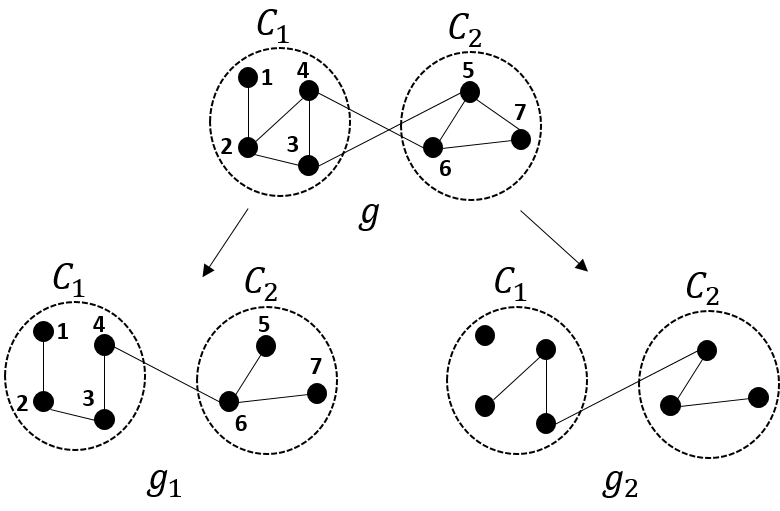}
    \caption{System model with the underlying 
    graph $g$ with $2$ communities, the public graph $g_1$ and the anonymized graph $g_2$.}
    \label{fig:model}
    \end{figure}
    
	 Edges in $E_g$ are drawn independently at random as follows: nodes $u \in C_i$ and $v \in C_{j}$ are connected with probability $p_{ij}$. 
In literature, this model is referred to as the 
\emph{stochastic block model}~\cite{girvan2002community}, and is simply the Erd\H{o}s-R\'enyi graph model for $k=1$. 
	%

From underlying graph of true relationships between 
users, $g$,  
we obtain two independent samples $ g_1 = (V, E_{g_1}) $ and $ g_2 = (V, E_{g_2})$, as shown with an example in Figure~\ref{fig:model}, 
that 
represent the connectivity graphs of two different 
social network platforms. As in \cite{grossglauser11}, we 
assume that only edges are sampled and not nodes, i.e., both social networks capture the same population of 
individuals. In particular:
	\begin{equation*}
	\pr\{(u,v)\in E_{g_i}\}=\begin{cases}
	s &\text{if}\; (u,v)\in E_{g}\\
	0 &\text{otherwise}
	\end{cases}
	\end{equation*} 
	for $ i=1,2 $  where we further assume that each edge is sampled independently. 
	
	In the context of the 
	de-anonymization problem, we assume that all 
	the labels of the nodes of $g_1$ are given and the attacker's goal is to recover the labels of $g_2$ given $g_1$. {However, as noted in Section~\ref{intro}, 
	we will assume that the community assignments 
	are known in both $g_1$ and $g_2$.}  
	This problem
	is equivalent to finding a 
	matching between the node sets of two graphs. A 
	matching of $ g_1 $'s nodes to $g_2$'s nodes, which is also a permutation over $ [n] $, 
	will be denoted as $ \pi : V \rightarrow V $, where $\Pi$ is the set of all possible permutations that is compatible with the community assignments. 
	
	With some abuse of notation, for an 
	edge $e \in E_{g_1}$, we will allow $\pi(e)$ to 
	denote the image of $e$ in $E_{g_2}$. That is, for 
	$(u,v) \in  E_{g_1}$, $\pi((u,v)) = (\pi(u),\pi(v))$.
	Given community labels, $E^{ij}_{g_k}$ refers to the 
	subset of edges in $g_k$ ($k \in \{1,2\}$) 
	that lie between nodes in community $C_{i}$ and $C_{j}$ and similar definition follows for $E^{ij}_{g}$.   
	 
	    
	\section{Maximum a Posteriori De-Anonymization} \label{MAP}
	In this section, we derive the 
	MAP estimate of $\pi$ (the labeling of nodes in $g_2$), assuming an a priori uniform distribution over 
	$ \pi $, 
	given realizations of $g_1$ and $g_2$. In particular, we will show that the the MAP estimate corresponds to the 
	matching that minimizes the cost function, $\Delta_{\pi}$
	that we define below.
\begin{defn}
\begin{equation}
	\Delta_{\pi}=\sum_{i\leq j}^{k} \omega_{ij}\hspace{-0.4em} \left( \sum_{e\in E_{g_1}^{ij}} \ind\{\pi(e)\notin E_{g_2}^{ij}\} +\hspace{-0.8em}\sum_{e\in E_{g_2}^{ij}} \ind\{\pi^{-1}(e)\notin E_{g_1}^{ij}\}\hspace{-0.3em}\right) \label{0431}
\end{equation}
where
\begin{equation*}
	\omega_{ij}= \log \left( \frac{1-p_{ij} s(2-s)}{p_{ij}(1-s)^2}\right)\;, 
	\end{equation*}
	and $ \sum_{i\leq j}^{k} $ is the short-hand notation for $ \sum_{1\leq i\leq j\leq k} $.
\end{defn}

By virtue of being the MAP estimate, 
	minimizing $\Delta_{\pi}$ corresponds to the attacker's optimal 
	strategy that minimizes error probability~\cite{salehi2007digital}, i.e., 
	the probability of obtaining 
	an incorrect matching. 
	On the other hand, if we set all the weights $\omega_{ij}$ in Equation~\ref{0431} 
to $1$, $\Delta_{\pi}$ above would correspond to the 
``edge mismatch" 
cost function that \cite{grossglauser11} and subsequent works~\cite{lyzinski2014seeded,kazemi2015can}
minimize. This strategy is
sub-optimal, in general, for community 
structured graphs.
 


Now, we define the MAP estimate of $\pi$ given 
$g_1$ and $g_2$ as:
\begin{align*}
	\text{MAP}(g_1,g_2)&=  \argmax_{\pi\in \Pi} p(\pi|g_1,g_2) \; ,
\end{align*}
which can be further written as:
\begin{align*}
	\text{MAP}(g_1,g_2)=\argmax_{\pi\in \Pi} \sum_{g\in \mc G_{\pi}} p(g,\pi|g_1,g_2) \; , 
\end{align*}
where $\mc G_{\pi}$ is the set of all 
underlying graphs that are consistent 
with $g_1$ and $g_2$ given $\pi$.  


\begin{theorem}\label{mapthm}
Assuming $p_{ij} < 1/2$ for all $i,j \in [1,k]$
	\begin{align*}
		\text{MAP}(g_1,g_2)=\argmin_{\pi\in\Pi} \Delta_{\pi}
	\end{align*}
\end{theorem}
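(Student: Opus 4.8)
The plan is to reduce the MAP optimization to a maximum-likelihood computation and then recognize the likelihood as a weighted count of matched edges. Since the prior over $\pi$ is uniform, Bayes' rule gives $p(\pi \mid g_1, g_2) \propto p(g_1, g_2 \mid \pi)$, so $\text{MAP}(g_1,g_2) = \argmax_{\pi\in\Pi} p(g_1,g_2\mid\pi)$. I would then marginalize the underlying graph $g$ slot by slot: because the edges of $g$ are drawn independently and each is sampled into $g_1$ and $g_2$ independently with probability $s$, the joint law of $(g_1, g_2)$ given $\pi$ factorizes over the $\binom{n}{2}$ vertex pairs. This factorization is exactly the sum over $\mathcal{G}_\pi$ in the excerpt: a pair contributes a factor depending only on whether an edge is present in $g_1$, in the $\pi$-aligned $g_2$, in both, or in neither.

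Second, I would compute these four per-slot factors for a pair in communities $C_i, C_j$. Marginalizing over the presence of the underlying edge, the four cases carry probabilities $p_{ij}s^2$ (edge seen in both), $p_{ij}s(1-s)$ (seen only in $g_1$), $p_{ij}s(1-s)$ (seen only in the aligned $g_2$), and $1 - p_{ij}s(2-s)$ (seen in neither, combining ``no underlying edge'' with ``underlying edge sampled away in both graphs''). Taking logarithms and summing, the log-likelihood becomes a sum over community pairs of a term linear in the number of pairs in each of the four states.

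Third --- \emph{the crux of the argument} --- I would collect the $\pi$-dependent part. For fixed $g_1, g_2$ with their known community labels, the edge counts $|E_{g_1}^{ij}|$ and $|E_{g_2}^{ij}|$ and the total number of slots are constants; only the number of matched edges $m_{ij}(\pi) = |\{e \in E_{g_1}^{ij} : \pi(e) \in E_{g_2}^{ij}\}|$ varies with $\pi$. Rewriting the four state-counts in terms of $m_{ij}$, every term not proportional to $m_{ij}$ drops out of the $\argmax$, and the coefficient of $m_{ij}$ simplifies (using the identity $s(2-s)+(1-s)^2 = 1$) to exactly $\omega_{ij} = \log\frac{1-p_{ij}s(2-s)}{p_{ij}(1-s)^2}$. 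Hence $\log p(g_1,g_2\mid\pi) = \sum_{i\le j}\omega_{ij}\,m_{ij}(\pi) + \text{const}$.

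Finally, I would connect this to $\Delta_\pi$. The two inner sums in \eqref{0431} count the $g_1$-edges and $g_2$-edges left unmatched, i.e. $(|E_{g_1}^{ij}| - m_{ij}) + (|E_{g_2}^{ij}| - m_{ij})$, so $\Delta_\pi = \text{const} - 2\sum_{i\le j}\omega_{ij}\,m_{ij}(\pi)$. Maximizing the likelihood therefore coincides with minimizing $\Delta_\pi$, which is the claim. The hypothesis $p_{ij} < 1/2$ enters only to guarantee $\omega_{ij} > 0$ (in fact $p_{ij}<1$ already forces this, since the numerator minus denominator equals $1-p_{ij}$), so that $\Delta_\pi$ is a genuine nonnegative mismatch penalty; the equivalence itself is the sign-free algebraic identity above. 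I expect the main obstacle to be the bookkeeping in the third step: carefully expressing the four state-counts through $m_{ij}$, verifying the logarithmic simplification that produces $\omega_{ij}$, and making sure the $\pi$-independent constants are cleanly separated out.
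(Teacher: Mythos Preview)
Your proposal is correct and lands on the same identity as the paper, but the route differs in a useful way. The paper keeps the marginalization over the latent graph $g$ global: it writes $\sum_{g\in\mathcal G_\pi}\prod_{i\le j}\bigl(p_{ij}(1-s)^2/(1-p_{ij})\bigr)^{|E_g^{ij}|}$, then identifies the minimal consistent graph $g^*_\pi=(V,E_{g_1}\cup\pi(E_{g_2}))$, evaluates the sum over all supersets of $E_{g^*_\pi}$ as a binomial expansion, and finally expresses $|E^{ij}_{g^*_\pi}|$ in terms of the mismatch counts. You instead factor the likelihood over vertex pairs at the outset and marginalize the single Bernoulli edge in each slot, which produces the four state probabilities $p_{ij}s^2,\ p_{ij}s(1-s),\ p_{ij}s(1-s),\ 1-p_{ij}s(2-s)$ directly; the parameter you track is the match count $m_{ij}$ rather than the union size $|E^{ij}_{g^*_\pi}|=|E^{ij}_{g_1}|+|E^{ij}_{g_2}|-m_{ij}$. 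The two computations are equivalent, but yours avoids the detour through $\mathcal G_\pi$ and the binomial sum, and makes the appearance of $\omega_{ij}$ as the log of a probability ratio transparent. Your side remark is also right: since $s(2-s)+(1-s)^2=1$, the numerator of $\omega_{ij}$ exceeds the denominator by $1-p_{ij}$, so $\omega_{ij}>0$ already for $p_{ij}<1$ and the hypothesis $p_{ij}<1/2$ is not actually used in either argument.
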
 
\begin{proof}		
		 Let us start with elaborating $ \text{MAP}(g_1,g_2) $,
	\begin{align}
		&\text{MAP}(g_1,g_2) \nonumber\\
		&=\argmax_{\pi\in \Pi} \sum_{g\in \mc G_{\pi}}  p(g_1|g)\cdot p(g_2|g,\pi)\cdot p(g) \label{1538}\\
		\begin{split}	
			&=\textstyle\argmax_{\pi\in \Pi} \sum_{g\in \mc G_{\pi}}  \prod_{i\leq j}^{k} (1-s)^{|E_{g}^{ij}|-|E_{g_1}^{ij}|}s^{|E_{g_1}^{ij}|} \\
			&\textstyle\hspace{9em}\cdot \prod_{i\leq j}^{k} (1-s)^{|E_{g}^{ij}|-|E_{g_2}^{ij}|}s^{|E_{g_2}^{ij}|}\\
			&\textstyle\hspace{9em}\cdot \prod_{i\leq j}^{k} p_{ij}^{|E_{g}^{ij}|} (1-p_{ij})^{N_{}^{ij}-|E_{g}^{ij}|}	
		\label{1643} \end{split} \\
		\begin{split}		
			&=\argmax_{\pi\in \Pi}\left( \prod_{i\leq j}^{k}\left(\frac{s}{1-s}\right)^{|E_{g_1}^{ij}|+|E_{g_2}^{ij}|}(1-p_{ij})^{N_{}^{ij}} \right) \\ 
			&\hspace{4.4em}\cdot\left(\sum_{g\in \mc G_{\pi}} \prod_{i\leq j}^{k} \left( \frac{p_{ij}(1-s)^2}{1-p_{ij}}\right)^{|E_{g}^{ij}|}\right)	 
		\end{split}\nonumber\\
		&=\argmax_{\pi\in \Pi}  \sum_{g\in \mc G_{\pi}} \prod_{i\leq j}^{k} \left( \frac{p_{ij}(1-s)^2}{1-p_{ij}}\right)^{|E_{g}^{ij}|}\label{0341} 
	\end{align}
	In \eqref{1538} we used the facts
	 that distribution of $ g_1 $ given $ g $ 
	is independent of the labeling of $ g_2 $, and $ \pi $ has uniform prior distribution.
	 In \eqref{1643}, $ N^{ij} $
	 is the number of node pairs from communities $ C_i $ and $ C_j $, and  while writing \eqref{0341} we used 
	 the observation that 
		$\sum_{i\leq j}^{k} \left({|E_{g_1}^{ij}|+|E_{g_2}^{ij}|}\right)={|E_{g_1}|+|E_{g_2}|}$
	is constant given $g_1$ and $g_2$.
	
	Now let $ g^*_{\pi} $ denote the graph having the smallest number of edges in $ \mc G_\pi $, i.e, $g^*_{\pi}  = (V, E_{g_1} \cup \pi(E_{g_2}))$. 
	Note that $ \mc G_\pi $
	 consists of all graphs whose edge sets are supersets 
	 of  $ g^*_{\pi} $. Summing over all graphs in $ \mc G_\pi $, we get:
	\begin{align}
		\begin{split}
			&\text{MAP}(g_1,g_2)=\argmax_{\pi\in \Pi}\left\{ \prod_{i\leq j}^{k} \left( \frac{p_{ij}(1-s)^2}{1-p_{ij}}\right)^{|E_{ g^*_{\pi}}^{ij}|}  \right.\\
			&\left. \cdot \prod_{i\leq j}^{k}\; \sum_{a_{ij}=0}^{N^{}_{ij}-|E_{ g^*_{\pi}}^{ij}|} {N^{}_{ij}-|E_{ g^*_{\pi}}^{ij}| \choose a_{ij}} \left( \frac{p_{ij}(1-s)^2}{1-p_{ij}}\right)^{a_{ij}}   \right\} \label{0338}
		\end{split} 
	\end{align}
	%
	Noting that sum in \eqref{0338} is a binomial sum, we can further write \eqref{0338} as: 
	\begin{align}
		&\textstyle\argmax_{\pi\in \Pi} \prod_{i\leq j}^{k} \left( \frac{p_{ij}(1-s)^2}{1-p_{ij} s(2-s)}\right)^{|E_{ g^*_{\pi}}^{ij}|} \nonumber\\
		&=\textstyle	\argmin_{\pi\in \Pi}
		 \sum_{i\leq j}^{k} |E_{ g^*_{\pi}}^{ij}| \log \left( \frac{1-p_{ij} s(2-s)}{p_{ij}(1-s)^2}\right)  \label{0358}
	\end{align}
Now  we observe that 			
	\begin{align*}
		&|E_{ g^*_{\pi}}^{ij}| =\frac{1}{2}\left(|E_{ g_1}^{ij}|+|E_{ g_2}^{ij}|+\right.  \\ 
		&\left.\sum_{e\in E_{g_1}^{ij}} \ind\{\pi(e)\notin E_{g_2}^{ij}\} +\sum_{e\in E_{g_2}^{ij}} \ind\{\pi^{-1}(e)\notin E_{g_1}^{ij}\}\right)\;,	    	
	\end{align*}	
	which allows us to write \eqref{0358} as	
	\begin{align*}
		\begin{split}
			&\argmin_{\pi\in \Pi}\textstyle \sum_{i\leq j}^{k} \omega_{ij}\hspace{-0.3em} \left( 
			\textstyle\sum_{e\in E_{g_1}^{ij}} \ind\{\pi(e)\notin E_{g_2}^{ij}\}\hspace{-0.25em}\right.\\
			&\left. \hspace{7.5em}+\textstyle\sum_{e\in E_{g_2}^{ij}} \ind\{\pi^{-1}(e)\notin E_{g_1}^{ij}\}\hspace{-0.3em}\right)
		\end{split}\\
		&=\argmin_{\pi\in \Pi} \Delta_{\pi}
	\end{align*}
	since $|E_{ g_1}^{ij}|+|E_{ g_2}^{ij}|$ does not depend on $ \pi $. 
	 \end{proof}
\begin{corollary} \label{singlecomm}
If $g$ is an Erd\H{o}s-R\'enyi graph with 
edge probability $p < 1/2$
\begin{equation}
\begin{split}
&\text{MAP}(g_1,g_2)=\\
&\hspace{-1em}\argmin_{\pi\in\Pi} \left( \sum_{e\in E_{g_1}} \ind\{\pi(e)\notin E_{g_2}\} +\hspace{-0.6em}\sum_{e\in E_{g_2}} \ind\{\pi^{-1}(e)\notin E_{g_1}\}\hspace{-0.3em}\right)\label{eq:single}
\end{split}
\end{equation}
\end{corollary}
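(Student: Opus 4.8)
The plan is to obtain Corollary~\ref{singlecomm} as the $k=1$ specialization of Theorem~\ref{mapthm}. First I would observe that an Erd\H{o}s-R\'enyi graph with edge probability $p$ is exactly the stochastic block model with a single community $C_1 = V$ and $p_{11}=p$. Consequently every edge of $g_1$ and $g_2$ lies within $C_1$, so $E_{g_1}^{11}=E_{g_1}$ and $E_{g_2}^{11}=E_{g_2}$, and the outer sum $\sum_{i\leq j}^{k}$ in the definition of $\Delta_\pi$ reduces to the single term $(i,j)=(1,1)$.

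Substituting $k=1$ into the conclusion of Theorem~\ref{mapthm} then gives
\[
\text{MAP}(g_1,g_2)=\argmin_{\pi\in\Pi}\; \omega_{11}\left( \sum_{e\in E_{g_1}} \ind\{\pi(e)\notin E_{g_2}\} + \sum_{e\in E_{g_2}} \ind\{\pi^{-1}(e)\notin E_{g_1}\}\right),
\]
where the single weight $\omega_{11}=\log\!\big((1-p\,s(2-s))/(p(1-s)^2)\big)$ is a constant independent of $\pi$.

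The only remaining step is to justify discarding this scalar weight. Because $\omega_{11}$ multiplies a nonnegative, $\pi$-dependent objective, the minimizer is preserved exactly when $\omega_{11}>0$ (whereas a negative weight would turn the minimization into a maximization), so this sign is the point to verify. I would check $\omega_{11}>0$ by showing its argument exceeds $1$, i.e. $1-p\,s(2-s)>p(1-s)^2$. Since $(1-s)^2+s(2-s)=1$, the two $p$-terms combine and this inequality collapses to $p<1$, which is comfortably implied by the hypothesis $p<1/2$ used to invoke Theorem~\ref{mapthm}. Hence $\omega_{11}>0$, the weight can be dropped, and we recover exactly~\eqref{eq:single}.

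I expect this sign check to be the only substantive point: it is where the edge-probability bound guarantees that the positively weighted edge-mismatch count is being minimized rather than maximized, and everything else is a direct bookkeeping specialization of the general weighted cost $\Delta_\pi$.
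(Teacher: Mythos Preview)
Your proposal is correct and follows exactly the paper's approach: specialize Theorem~\ref{mapthm} to $k=1$ and observe that $\omega_{11}>0$, which allows the constant weight to be dropped from the $\argmin$. You actually supply more detail than the paper, explicitly verifying the sign of $\omega_{11}$ (and in fact showing $p<1$ suffices for that, although $p<1/2$ is still needed to invoke Theorem~\ref{mapthm}).
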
 
\begin{proof}
Set $k=1$ in Theorem~\ref{mapthm} and note that 
for $p_{11} < 1/2$, $\omega_{11} >0$.
\end{proof}

\begin{remark}
The cost function in \eqref{eq:single} 
is the same as the ``edge mismatch" cost function 
that \cite{grossglauser11} and subsequent works \cite{lyzinski2014seeded,kazemi2015can} minimize, but without proof 
of optimality. 
Corollary~\ref{singlecomm} establishes that 
when 
$g$ is an Erd\H{o}s-R\'enyi graph,
minimizing the edge 
mismatch cost function as defined by \cite{grossglauser11} 
indeed 
corresponds to the attacker's optimal strategy. This optimality result for the special case of Erd\H{o}s-R\'enyi graphs is also presented in a concurrent work \cite{daniel}. For the general case with two or 
more communities on the other hand, Theorem~\ref{mapthm} shows 
that 
inter-community and intra-community edge mismatches 
must be weighted differently.
\end{remark}

	\section{Probability of Error}
In the previous section, we 
characterized the attacker's optimal de-anonymization strategy for finite $n$. 	
	In this section, we derive sufficient conditions for which the attacker can find  \emph{asymptotically almost surely}  the
	 correct matching for the two community 
	 case. For notational simplicity let $p=p_{11}=p_{22} $ and $q=p_{12}$.  The corresponding result in the classical 
	 Erd\H{o}s-R\'enyi 
	 model follows as a special case.
	\begin{theorem} \label{Pethm}
	Given $g_1$, $g_2$ and $p,q \to 0$ where $q\leq p$, if  
	\begin{align*}
	\begin{split}
		s\left(1-\sqrt{1-s^2}\right)\left(p +2\cdot \frac{n_2}{n_1} q\right)=\frac{3\log n_1}{n_1}+\omega(n_1^{-1})\;,\\
		s\left(1-\sqrt{1-s^2}\right)\left(p +2\cdot \frac{n_1}{n_2} q\right)=\frac{3\log n_2}{n_2}+\omega(n_2^{-1})\;,	
	\end{split}	
	\end{align*}
		then the probability of error of the attacker can approach zero asymptotically, i.e., 
		$\argmin_{\pi} \Delta_{\pi} = \pi_0$ a.a.s. as $n_1,n_2 \to \infty$ where $ \pi_0 $ is the correct matching.  
	\end{theorem}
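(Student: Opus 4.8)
The plan is to bound the probability of the error event and show it vanishes. Assume without loss of generality that $\pi_0=\mathrm{id}$, so the error event is $\bigcup_{\pi\neq\mathrm{id}}\{\Delta_\pi\le\Delta_{\mathrm{id}}\}$. Using the expression for $|E_{g^*_\pi}^{ij}|$ derived in the proof of Theorem~\ref{mapthm}, minimizing $\Delta_\pi$ is equivalent to maximizing the weighted agreement $A_\pi=\sum_{i\le j}^{k}\omega_{ij}\,|E_{g_1}^{ij}\cap\pi(E_{g_2})^{ij}|$, so that
\[ A_\pi-A_{\mathrm{id}}=\sum_{\{x,y\}}\omega_{c(x)c(y)}\,g_1(x,y)\big(g_2(\pi^{-1}x,\pi^{-1}y)-g_2(x,y)\big), \]
where $c(\cdot)$ is the community label and $g_1(x,y),g_2(x,y)\in\{0,1\}$ indicate whether $\{x,y\}$ is an edge of $g_1,g_2$. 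The error event becomes $\bigcup_{\pi\neq\mathrm{id}}\{A_\pi\ge A_{\mathrm{id}}\}$, and since every $\pi\in\Pi$ permutes $C_1$ and $C_2$ separately I would stratify this union by $(m_1,m_2)$, the numbers of non-fixed points of $\pi$ in $C_1$ and $C_2$; only pairs with a displaced endpoint contribute.

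For a fixed $\pi$ I would control $\pr[A_\pi\ge A_{\mathrm{id}}]$ by a Chernoff bound, $\pr[A_\pi\ge A_{\mathrm{id}}]\le\ex[e^{\lambda(A_\pi-A_{\mathrm{id}})}]$ for $\lambda>0$. The subtlety preventing an immediate product form is that, although each summand above uses a distinct $g_1$-variable, each variable $g_2(a,b)$ appears in exactly two summands, those indexed by $\{a,b\}$ and by $\{\pi a,\pi b\}$, so the dependency structure is a disjoint union of cycles induced by the action of $\pi$ on node-pairs. I would therefore decompose $A_\pi-A_{\mathrm{id}}$ along these orbits and bound the moment generating function one cycle at a time (by a transfer-matrix computation, or by splitting each cycle into independent chains). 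Each displaced node $u$ contributes through its $\sim n_1$ intra-community witnesses (density $p$, weight $\omega_{11}$) and its $\sim n_2$ inter-community witnesses (density $q$, weight $\omega_{12}$); the mean drift has the correct (negative) sign, of the order of the number of incident edges sampled in both graphs, so the rate $R_\pi$ grows linearly in $m_1+m_2$.

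The exponential rate, obtained with a single exponent $\lambda$ valid uniformly over all cycle types, is governed not by the naive overlap $s^2$ but by the smaller effective density $s\big(1-\sqrt{1-s^2}\big)$: writing $t=1-\sqrt{1-s^2}$ one has $t(2-t)=s^2$, and it is this relation that converts the two independent $\mathrm{Bern}(s)$ samplings into the single effective parameter appearing in the statement. Summing over witnesses gives per-displaced-node rates $s(1-\sqrt{1-s^2})(n_1p+2n_2q)$ in $C_1$ and $s(1-\sqrt{1-s^2})(n_2p+2n_1q)$ in $C_2$. Since there are at most $n_1^{m_1}n_2^{m_2}$ permutations with $(m_1,m_2)$ displaced nodes, it suffices that $\sum_{(m_1,m_2)\neq(0,0)}n_1^{m_1}n_2^{m_2}e^{-R_\pi}\to0$; as $R_\pi$ is linear in $m_1,m_2$ this is a geometric series that tends to $0$ precisely when each per-node rate exceeds the entropy $\log n_i$ of selecting the displaced nodes by the stated factor, which are exactly the two displayed hypotheses (the $\omega(n_i^{-1})$ slack making the series summable). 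Dividing the $C_1$ rate by $n_1$ and the $C_2$ rate by $n_2$ reproduces the two conditions verbatim, and the Erd\H{o}s-R\'enyi case follows by collapsing the communities.

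The main obstacle I anticipate is the concentration step for non-transposition permutations: the pair-cycle dependence means $A_\pi-A_{\mathrm{id}}$ is not a sum of independent terms, and one must show, uniformly over all cycle structures and all $(m_1,m_2)$, that it still concentrates with a rate linear in the number of displaced nodes. Controlling the worst-case cycle with a single exponent is what forces the conservative factor $s(1-\sqrt{1-s^2})$ and the constant $3$, and keeping the intra- and inter-community witness counts separate (with their distinct densities $p,q$ and weights $\omega_{11},\omega_{12}$) without collapsing the two hypotheses is the delicate bookkeeping that produces the asymmetric coefficient on $q$.
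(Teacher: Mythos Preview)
Your overall architecture matches the paper's: stratify the union bound over $\pi\neq\mathrm{id}$ by the numbers $(m_1,m_2)$ of displaced vertices in each community, bound $\pr[\Delta_\pi\le\Delta_{\mathrm{id}}]$ by a Chernoff/MGF argument, recognize that the obstacle to independence is that the pair-terms form disjoint cycles under the action of $\pi$, and finish with a geometric sum against the entropy $m_1\log n_1+m_2\log n_2$.

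The paper differs from your sketch in two technical choices. First, rather than carrying the weights $\omega_{ij}$, it immediately passes to the \emph{unweighted} cost $\Delta'_\pi$ (setting $\omega_{11}=\omega_{12}$), observing that since $\Delta'_\pi$ is a sub-optimal attacker strategy the resulting error probability still upper-bounds the MAP one; this costs nothing asymptotically because $\omega_{11}/\omega_{12}\to 1$ when $p=\Theta(q)\to 0$, and it reduces the MGF to a clean product over $\{-1,0,1\}$-valued terms $A_e,B_e$ with explicit point masses. Second---and this is where the constant $3$ actually enters---the paper handles the cycle dependence not by a transfer-matrix computation but by a balanced $3$-coloring of the dependency graph (any disjoint union of cycles is $3$-colorable with each color class of size at least $\lfloor|\mc E|/3\rfloor$). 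This splits $Z+T$ into three subsums $Z_i+T_i$, each a sum of genuinely independent terms, and one uses $\pr\{Z+T\ge 0\}\le 3\max_i\pr\{Z_i+T_i\ge 0\}$. Optimizing the single Chernoff exponent $\varphi$ on one color class is what produces the factor $s(1-\sqrt{1-s^2})$; the $3$ is purely the coloring overhead, not, as you conjectured, a byproduct of forcing one $\lambda$ across all cycle types. Your transfer-matrix route could in principle sharpen the constant, but the paper's coloring trick reduces everything to the i.i.d.\ case at the price of that factor and avoids the cycle-by-cycle bookkeeping you flagged as the main obstacle.
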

	\begin{proof} 	
		Note that in $\Delta_{\pi}$, 
		as $ p,q\to 0 $, $\frac{\omega_{11}}{\omega_{12}} \to 1$ if {$ p=\Theta(q) $}. 
		For tractability of our proof, we will assume
		that the attacker uses a sub-optimal cost function $ \Delta_\pi' $ where $ \frac{\omega_{11}}{\omega_{12}} = 1$.  
		Since the error probability with $\Delta_\pi' $
		upper bounds that with $\Delta_\pi$, we still obtain \emph{sufficient} conditions 
		for correct de-anonymization.	
		
		Let
		\begin{align*}
		S_{k_1,k_2}=\sum_{\pi\in\Pi_{k_1,k_2}}\ind\{\Delta'_\pi\leq\Delta'_0\}
		\end{align*}	
		where $\Delta'_0$ is the cost corresponding to $\pi_0$. Here $\Pi_{k_1,k_2}$ denotes the set of label assignments for nodes in $ g_2 $ 
		that is compatible with the given community assignment and where $k_1$ of the nodes in $ C_1 $ and $k_2$ of those in $ C_2 $ are mismatched. Note $ S_{k_1,k_2} $ denotes the number of node matchings with $ k_i $ mismatches in $ C_i $ that has mismatch cost not greater than that of true matching. Let us denote the number of labelings with at least one mismatch and edge cost not greater than $ \Delta'_0 $ with $ S $, that~is 
		\begin{align}
		S= \sum_{k_1=0}^{n_1}\;\sum_{k_2=0,\,k_1+k_2\neq 0}^{n_2} S_{k_1,k_2} \label{1942}
		\end{align}
		Our proof is based on showing the expected value of $ S $, a
		non-negative random variable, is asymptotically 0 if the conditions stated in the theorem are satisfied, therefore guaranteeing that an attacker using the $ \Delta'_\pi $ cost function would be able to recover the true matching. We have	
		\begin{align}
		\ex[S]&=\sum_{k_1}^{}\sum_{k_2}\sum_{\pi\in\Pi_{k_1,k_2}}\pr\{\Delta'_\pi-\Delta'_0\leq 0\}\nonumber\\
		&\leq\sum_{k_1}^{}\sum_{k_2}^{}|\Pi_{k_1,k_2}|\cdot \max_{\pi\in\Pi_{k_1,k_2}}\pr\{\Delta'_0-\Delta'_\pi\geq 0\}\label{0019}
		\end{align}	
		where the summations still have the same restrictions of \eqref{1942} but we omitted to keep the notation simple.	
		Note that based on the definitions of $ \Delta'_0 $ and $ \Delta'_\pi $, any node pair, $ e $, satisfying $ e=\pi(e) $ contributes equally to $ \Delta'_\pi $ and $\Delta'_0$,  whether they have an edge between them or not. Let us define the sets
		\begin{align*}
			\mc E_{intra}^\pi&=\{e\in (C_1\times C_1) \cup (C_2\times C_2)\;:\; e\neq\pi(e) \} \\
			\mc E_{inter}^\pi&=\{e\in C_1\times C_2 \;:\; e\neq\pi(e) \} 
		\end{align*}
		Note 
		\begin{align}
		|\mc E_{intra}^\pi|&=\sum_{i=1,2}\left[{k_i\choose 2} + k_i(n_i-k_i)\right]-|\mc E_{tr}^\pi| \label{0119} \\		
		|\mc E_{inter}^\pi|&=k_1 n_2 +k_2n_1-k_1k_2 \nonumber
		\end{align}    
		where the sum in \eqref{0119} is the number of intra-community node pairs having at least one mismatched node under $ \pi $ and $ \mc E_{tr}^\pi $ is the set of pairs that are transpositions of $ \pi $, that is pairs $ (a,b) $ with $ \pi(a)=b $ and $ \pi(b)=a $. We can write		
		\begin{equation*}
		\Delta'_0-\Delta'_\pi=Y_\pi-X_{\pi}
		\end{equation*}
		where 
		\begin{align*}
		\begin{split}
		Y_{\pi}=&\sum_{e\in \mc E_{intra}^\pi} |\ind\{e\in E_{g_1}\}-\ind\{e\in E_{g_2}\}|\\			
		+&\sum_{e\in \mc E_{inter}^\pi} |\ind\{e\in E_{g_1}\}-\ind\{e\in E_{g_2}\}| 	 
		\end{split}		
		\end{align*}
		and 
		\begin{align*}
		\begin{split}
		X_{\pi}=&\sum_{e\in\mc E_{intra}^\pi} |\ind\{e\in E_{g_1}\}-\ind\{\pi(e)\in E_{g_2}\}|\\
		+&\sum_{e\in \mc E_{inter}^\pi} |\ind\{e\in E_{g_1}\}-\ind\{\pi(e)\in E_{g_2}\}|\;.	 
		\end{split}
		\end{align*}	
		We can rewrite the difference between $ Y_\pi $ and $ X_\pi $ as
		\begin{align*}
		Y_\pi-X_\pi&=\sum_{e\in\mc E_{intra}^\pi}\hspace{-1em}A_e+ \sum_{e\in\mc E_{inter}^\pi}\hspace{-1em} B_e\quad\;\text{where}\\
		\begin{split}
			A_e=B_e&= |\ind\{e\in E_{g_1}\}-\ind\{e\in E_{g_2}\}|\\
			&-|\ind\{e\in E_{g_1}\}-\ind\{\pi(e)\in E_{g_2}\}|		
		\end{split}
		\end{align*}
		Note although the expressions for $ A_e $ and $ B_e $ are the same, their distributions are possibly different since they involve intra- and inter-community edges respectively.	Specifically,
		\begin{align*}
		\begin{split}
		u_1\triangleq\pr\{A_e=1\}=&\pr\left\{e\in E_{g_1},\pi(e)\in E_{g_2},e\notin E_{g_2} \right\}\\ 
		+ & \pr\left\{e\notin E_{g_1},\pi(e)\notin E_{g_2},e\in E_{g_2}\right\}
		\end{split}\\
		=&ps(1-s)
		\end{align*}
		By similar arguments
		\begin{align*}
		u_3&\triangleq \pr\{A_e=-1\}=ps(s+1-2ps)\\	
		v_1&\triangleq\pr\{B_e=1\}=qs(1-s)\\
		v_3&\triangleq\pr\{B_e=-1\}=qs(s+1-2qs) 
		\end{align*}
		and define $ u_2\triangleq 1-u_1-u_3 $, $ v_2\triangleq 1-v_1-v_3 $. 	If we denote $ Z=\sum_{e\in\mc E_{intra}^\pi}\hspace{-1em}A_e $ and $ T=\sum_{e\in\mc E_{inter}^\pi}\hspace{-1em}B_e $, we get	
		\begin{equation}
		\pr\{\Delta'_0-\Delta'_\pi\geq 0\}=\pr\{Y_\pi-X_{\pi}\geq 0\}=\pr\{Z+T\geq 0\}\label{0008}
		\end{equation}	
		Terms of the sum in $ Z $ can be dependent due to cycles in the mapping $ \pi $, and so are those of $ T $, but each term in $ Z $ is independent of all terms in $ T $. Next we decompose $ Z $ and $ T $ into three sums such that each sum consists of only independent terms. 
		\begin{prop}\label{partition}
			There exists a partition $ \mc E_{intra}^\pi=\cup_{i=1}^3 \mc E_{intra,i}^\pi$ such that 
		\begin{align}
			&\Big(\bigcup_{e\in \mc E_{intra,i}^\pi}\hspace{-1em} \{\pi(e)\}\Big) \cap \mc E_{intra,i}^\pi =\O \quad \text{and}\label{1859}\\
			&\left\vert\mc E_{intra,i}^\pi\right\vert \geq \left\lfloor \frac{\left\vert\mc E_{intra}^\pi\right\vert}{3}\right\rfloor \quad\text{for all $ i=1,2,3. $}\label{1901}
		\end{align}
		Similar result holds for $ \mc E_{inter}^\pi=\cup_{i=1}^3 \mc E_{inter,i}^\pi $.
		\end{prop}
\begin{proof}
	For any mapping $\pi$, we define a dependency graph $D^{\pi}$ such that every node pair in 
	$\mc E_{intra}^{\pi}$ corresponds to a vertex in $D^{\pi}$. An edge exists between $e$ and $e'$ 
	of $D^{\pi}$ if and only if $\pi(e) = e'$ or $\pi(e') = e$. Any partitioning of  
	$\mc E_{intra}^\pi$	that meets \eqref{1859}
	corresponds to a \emph{vertex coloring} in $D^{\pi}$. 
	Note that $ D^{\pi} $ consists of a finite number of disjoint cycles. Thus,  we can
	 color $ D^{\pi} $ using three colors (since an odd cycle would require three colors \cite{brooks}). 
	 The second condition of the Proposition \eqref{1901} follows from induction on the number
	  of cycles in $ D^{\pi} $.
\end{proof}
	Accordingly we let
		\[Z_i=\sum_{e\in \mc E_{intra,i}^\pi}\hspace{-1em} A_e\quad\text{and}\quad T_i=\sum_{e\in \mc E_{inter,i}^\pi}\hspace{-1em} B_e\] 	
		Continuing from \eqref{0008},
		\begin{align}
			&\pr\{Z+T\geq 0\}=\pr\{\textstyle\sum_{i=1}^{3} (Z_i + T_i)\geq 0 \}\nonumber\\
			&\leq \textstyle\sum_{i=1}^{3} \pr\{Z_i + T_i\geq 0 \}\leq 3\max_i\pr\{Z_i + T_i\geq 0 \} \label{0234}
		\end{align}
		For $ i=1,2,3 $
		\begin{align}
		&	\pr\{Z_i + T_i\geq 0 \}
		=\pr\{e^{\varphi (Z_i+T_i)}\geq 1\} \quad \varphi>0\nonumber\\
		&\phantom{\pr\{Z_i + T_i\geq 0 \}} \leq\ex[e^ {\varphi(Z_i+T_i)} ] \label{1823}\\
		&=\left(u_1e^\varphi+u_2+u_3e^{-\varphi}\right)^{n_{Z_i}}\left(v_1e^{\varphi}+v_2+v_3e^{-\varphi}\right)^{n_{T_i}}\nonumber\\		
		\begin{split}
		&\leq\exp\left[n_{Z_i}(u_3-e^\varphi u_1)(e^{-\varphi}-1)\right.
		\\&\hspace{3em}
		+\left.n_{T_i}(v_3-e^{\varphi} v_1)(e^{-\varphi}-1)\right] \label{155}
		\end{split}	
		\end{align}
	where $ n_{Z_i}=|\mc E_{intra,i}^\pi| $ and $ n_{T_i}=|\mc E_{inter,i}^\pi| $, \eqref{1823} is due to Markov's inequality and we use the inequality
	$x\leq e^{x-1}$
	in \eqref{155}. To find the smallest upper bound, we find the $ \varphi^* $ that sets the derivative of the exponent in \eqref{155}, which is a convex function of $ \varphi $, to 0. Inserting $ \varphi^* $ in the expression we get
	\begin{equation}
		\pr\{Z_i + T_i\geq 0 \}\leq e^{-\left(\sqrt{n_{Z_i}u_3+n_{T_i}v_3}-\sqrt{n_{Z_i}u_1+n_{T_i}v_1}\right)^2} \label{0219}
	\end{equation}	
	Now we find lower bounds on $ n_{Z_i} $ and $ n_{T_i} $
	\begin{align}
	n_{Z_i}&\geq \left\lfloor \frac{\left\vert\mc E_{intra}^\pi\right\vert}{3}\right\rfloor\hspace{-0.2em} \geq\hspace{-0.2em} \frac{1}{3}\sum_{i=1,2}\left[{k_i\choose 2} + k_i(n_i-k_i)\right]-\frac{|\mc E_{tr}^\pi|}{3}-1\nonumber\\
	&\geq  \frac{1}{3}\textstyle\sum_{i=1,2}k_i\left(n_i-\frac{k_i}{2}-1\right)-1 \quad\;\text{and}\label{2327}\\
		n_{T_i}&\geq \left\lfloor \frac{\left\vert\mc E_{inter}^\pi\right\vert}{3}\right\rfloor \geq \frac{k_1n_2+k_2n_1-k_1k_2}{3}-1\nonumber\\
		&\hspace{5.7em}\geq \frac{k_1n_2+k_2n_1}{6}-1\nonumber
	\end{align}
	\eqref{2327} is due to the bound $ |\mc E_{tr}^\pi|\leq (k_1+k_2)/{2} $. It can be checked that the derivative of the exponent in \eqref{0219} with respect to both $ n_{Z_i} $ and $ n_{T_i} $ is negative for sufficiently small  $ p $ and $ q $. Therefore from \eqref{0219} and the lower bounds over $ n_{Z_i} $ and $ n_{T_i} $ found above,
	\begin{align}
		\pr\{Z_i + T_i\geq 0 \}\leq e^{ -\frac{s}{3}(1-\sqrt{1-s^2})\left(k_1 n_1 p+ k_1n_2 q+k_2n_2p+ k_2n_1 q\right)}\label{0007}
	\end{align}	
	From \eqref{0008}, \eqref{0234} and \eqref{0007},
	\begin{align*}
		\pr\{\Delta'_0-\Delta'_\pi\geq 0\}\leq 3 e^{\frac{s}{3}(\sqrt{1-s^2}-1)\left(k_1 n_1 p+  k_1n_2 q+k_2n_2p+k_2n_1 q\right)}
	\end{align*}
	Let us now upper bound the other term in the summand of \eqref{0019},
	\begin{align}
	|\Pi_{k_1,k_2}|&\leq n_1^{k_1}n_2^{k_2}=\exp\left[k_1\log n_1+k_2\log n_2\right]	\label{0140}
	\end{align}	
	Using (\ref{0007}) and (\ref{0140}) in (\ref{0019}) we get,
	\begin{align*}	
	\begin{split}
	&\ex[S]\leq  \\
	&3\sum_{k_1}^{}\sum_{k_2}^{}\exp\left[k_1\left[\log n_1-s(1-\sqrt{1-s^2})(n_1p+ n_2 q)/3\right]\right.\\
	&+\left.k_2\left[\log n_2-s(1-\sqrt{1-s^2})(n_2p+ n_1 q)/3\right]\right]
	\end{split}	
	\end{align*}	
	The sum goes to 0 and thus we guarantee successful MAP de-anonymization if
	\begin{align*}
	\begin{split}
	&s\left(1-\sqrt{1-s^2}\right)\left(p + \frac{n_2}{n_1} q\right)=\frac{3\log n_1}{n_1}+\omega(n_1^{-1})\quad\text{and}\\
	&s\left(1-\sqrt{1-s^2}\right)\left(p + \frac{n_1}{n_2} q\right)=\frac{3\log n_2}{n_2}+\omega(n_2^{-1})\;.  
	\end{split}	
	\end{align*}	%
	\end{proof}	
%
	%
	\begin{corollary}\label{prob}
		In the case of single community (regular Erd\H{o}s-R\'enyi graph with $ n_1 $ nodes), a sufficient condition for de-anonymization is	
	\begin{align*}
	ps\left(1-\sqrt{1-s^2}\right)=\frac{3\log n_1}{n_1}+\omega(n_1^{-1}). 
	\end{align*}
	\end{corollary}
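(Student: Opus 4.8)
The plan is to obtain Corollary~\ref{prob} as the single-community specialization of Theorem~\ref{Pethm}, since a regular Erd\H{o}s-R\'enyi graph on $n_1$ nodes is exactly the stochastic block model with the second community empty. Concretely, I would set $k=1$ throughout the proof of Theorem~\ref{Pethm}, which amounts to taking $n_2 = 0$, $k_2 = 0$, and $q = 0$. Two simplifications occur immediately. First, there are no inter-community node pairs, so $\mc E_{inter}^\pi = \emptyset$, and hence $T = T_i = 0$, $n_{T_i} = 0$, while $v_1, v_3$ never appear. Second, with a single edge type there is only one weight $\omega_{11}$ in the cost $\Delta_\pi$, which is strictly positive for $p < 1/2$ by Corollary~\ref{singlecomm}; consequently the weight-equalization step that produced the sub-optimal $\Delta_\pi'$ is vacuous here, and the bound we derive applies directly to the genuine MAP cost $\Delta_\pi$.

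I would then re-run the first-moment argument. Define $S = \sum_{k_1=1}^{n_1} S_{k_1}$, where $S_{k_1}$ counts permutations mismatching exactly $k_1$ nodes whose cost does not exceed $\Delta_0$, and bound $\ex[S] \le \sum_{k_1 \ge 1} |\Pi_{k_1}|\max_{\pi}\pr\{\Delta_0-\Delta_\pi \ge 0\}$ as in \eqref{0019}. Proposition~\ref{partition} applies verbatim: the dependency graph $D^\pi$ on $\mc E_{intra}^\pi$ is a union of cycles and three-colorable, giving $Z = \sum_{i=1}^3 Z_i$ as sums of independent $\pm1$ terms with $n_{Z_i} \ge \lfloor |\mc E_{intra}^\pi|/3\rfloor \ge \tfrac{1}{3}k_1(n_1-\tfrac{k_1}{2}-1)-1$, exactly as in \eqref{2327}. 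The Chernoff bound \eqref{0219}, now with $n_{T_i}=0$, reduces to $\pr\{Z_i\ge 0\}\le e^{-n_{Z_i}(\sqrt{u_3}-\sqrt{u_1})^2}$ with $u_1 = ps(1-s)$ and $u_3 = ps(s+1-2ps)$; its small-$p$ evaluation \eqref{0007}, with all inter-community terms dropped, then gives $\pr\{Z_i\ge 0\}\le \exp[-\tfrac{s}{3}(1-\sqrt{1-s^2})\,k_1 n_1 p]$.

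Finally I would combine this with $|\Pi_{k_1}|\le n_1^{k_1}=e^{k_1\log n_1}$ to get
\begin{align*}
\ex[S]\le 3\sum_{k_1\ge 1}\exp\!\Big[k_1\big(\log n_1-\tfrac{s}{3}(1-\sqrt{1-s^2})\,n_1 p\big)\Big],
\end{align*}
a geometric series whose sum vanishes precisely when the bracketed exponent is negative with a diverging margin, i.e. when $\tfrac{s}{3}(1-\sqrt{1-s^2})\,n_1 p > \log n_1$. Rearranging yields the stated threshold $ps(1-\sqrt{1-s^2}) = \tfrac{3\log n_1}{n_1}+\omega(n_1^{-1})$, and $\ex[S]\to 0$ together with Markov's inequality forces $S=0$ a.a.s., so that every non-identity permutation has cost strictly above $\Delta_0$ and $\argmin_\pi \Delta_\pi = \pi_0$ almost surely. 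The computation is essentially mechanical, so I do not expect a genuine obstacle; the only points requiring care are checking that the cycle partition of Proposition~\ref{partition} still yields the $n_{Z_i}\gtrsim k_1 n_1/3$ bound in the degenerate case, and noting that because the sum over $k_1$ is geometric its asymptotic behaviour is governed entirely by the $k_1=1$ term --- which is exactly what fixes the constant $3$ (inherited from the three-colouring) in front of $\log n_1/n_1$.
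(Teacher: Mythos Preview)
Your proposal is correct and follows the same underlying route as the paper: specialize the two-community result of Theorem~\ref{Pethm} to a single Erd\H{o}s-R\'enyi graph. The only difference is cosmetic. The paper's proof is two sentences: it plugs $q=0$ and $n_1=n_2$ directly into the sufficient conditions of Theorem~\ref{Pethm} and observes that with no inter-community edges the two communities decouple into independent single-community problems, each yielding the stated threshold. You instead take $n_2=0$ and re-run the proof machinery (Proposition~\ref{partition}, the Chernoff bound, and the geometric sum), which is arguably more direct since Theorem~\ref{Pethm} as stated requires $n_1,n_2\to\infty$ and so cannot literally be invoked at $n_2=0$; but it is also longer than necessary given that the two-community computation has already been done.
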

	\begin{proof}
		The result can be obtained by setting $ q=0 $, $ n_1=n_2$ in Theorem~\ref{Pethm}. Note that in this case there are no inter-community edges and the setting is equivalent to de-anonymization in each community separately. 
	\end{proof}
	\begin{remark}
 For the case of symmetric communities, $ n_1=n_2 $ and non-zero inter-community edge probability $ q $, conditions given in Theorem~\ref{Pethm} are less strict than the result in Corollary~\ref{prob} suggesting that inter-community edges help in the de-anonymization of nodes within a community.     		
	\end{remark}	
	\begin{remark}
		Concurrent work on the matching of regular (single-community) correlated Erd\H{o}s-R\'enyi graphs \cite{daniel} 
		proposes a proof based on combinatorial methods and obtains a sufficient condition for successful de-anonymization
		that is stronger than Corollary~\ref{prob} for the single community case. However our result in Theorem~\ref{Pethm} is more general since it handles graphs with community structure.
	\end{remark}	
	\begin{remark}
		Setting $s=1$ in Corollary~\ref{prob} provides the 
		following sufficient condition for successful de-anonymization:
		\begin{align}
		p=\frac{3\log n_1}{n_1}+\omega(n_1^{-1}) \label{auto}
		\end{align}
	Note for this special case, a necessary and sufficient condition for 
	successful de-anonymization  
	is the graph $ g $ not having any automorphisms (other than itself). 
	The condition for this, found in \cite{wright1971graphs}, is tighter than  
	\eqref{auto} by a factor of 3, showing that there is room for improvement in the calculation of probability of error.
	\end{remark}
	\section{Conclusion}	
	In this work we have investigated the de-anonymization problem in social networks for 
	graphs with community structure. We have characterized the optimal attack strategy 
	for this setting by determining the MAP estimate of the matching $\pi$, and determined 
	sufficient conditions for successful de-anonymization, asymptotically,  for large graphs. For the special case of a single community, our results have proved the 
	optimality of the de-anonymization strategy adopted in prior work. 

	\bibliographystyle{IEEEtran}
	\bibliography{refsPaper}

\end{document}